\newcommand{\B}[1]{\boldsymbol #1}
\begin{document}
\title{Conditions for Unnecessary Logical \\ Constraints in Kernel Machines\thanks{This is a post-peer-review, pre-copyedit version of an article published in LNCS, volume 11728. The final authenticated version is available online at: \texttt{https://doi.org/10.1007/978-3-030-30484-3\_49}}}
%
%
\author{Francesco Giannini \and Marco Maggini}

\authorrunning{F. Giannini \and M. Maggini}
%
\institute{Department of Information Engineering and Mathematical Sciences\\
	University of Siena, Siena, via Roma 56, Italy \\
\email{\{fgiannini,maggini\}@diism.unisi.it} }
\maketitle              
\begin{abstract}
	A main property of support vector machines consists in the fact that only a small portion of the training data is significant to determine the maximum margin separating hyperplane in the feature space, the so called \emph{support vectors}. In a similar way, in the general scheme of learning from constraints, where possibly several constraints are considered, some of them may turn out to be unnecessary with respect to the learning optimization, even if they are active for a given optimal solution. In this paper we extend the definition of support vector to {\em support constraint} and we provide some criteria to determine which constraints can be removed from the learning problem still yielding the same optimal solutions. In particular, we discuss the case of logical constraints expressed by \L ukasiewicz logic, where both inferential and algebraic arguments can be considered. Some theoretical results that characterize the concept of {\em unnecessary constraint} are proved and explained by means of examples.

\keywords{Support Vectors  \and First--Order Logic \and Kernel Machines.}
\end{abstract}

\section{Introduction}

Support vector machines (SVMs) are a class of kernel methods originally conceived by Vapnik and Chervonenkis \cite{Cortes1995}. One of the main advantages of this approach is the capacity to create nonlinear classifiers by applying the kernel trick to maximum--margin hyperplanes \cite{boser1992training,Cortes1995}. This property derives from the implicit definition of a (possibly infinite) high--dimensional feature representation of data determined by the chosen kernel. In the supervised case, the learning strategy consists in the optimization of an objective function, given by a regularization term, subject to a set of constraints that enforce the membership of the example points to the positive or negative class, as specified by the provided targets. The satisfaction of these constraints can be obtained also by the minimization of a hinge loss function that does not penalize output values ``beyond'' the target. As a consequence, the solution of the optimization problem will depend only on a subset of the given training data, namely those that contribute to the definition of the maximum--margin hyperplane separating the two classes in the feature space. In fact, if we approach the problem in the framework of constrained optimization, these points will correspond to the {\em active} constraints in the Lagrangian formulation. This means that we can split the training examples into two categories, the \emph{support vectors}, that completely determine the optimal solution of the problem, and the \emph{straw vectors}. By solving the Lagrangian dual of the optimization problem, the support vectors are those supervised examples corresponding to constraints whose Lagrangian multiplier is not null. In this paper we extend this paradigm to a class of semi--supervised learning problems where logical constraints are enforced on the available samples. 
\vspace{-0.05cm}

Learning from constraints has been proposed in the framework of kernel machines as an approach to combine prior knowledge and learning from examples \cite{gnecco2015foundations}. In particular, some techniques to exploit knowledge expressed in a description logic language \cite{cumby2003kernel} and by means of first-order logic (FOL) rules have been proposed in the literature \cite{muggleton2005support,diligenti2012bridging}.  In general, these techniques assume a multi--task learning paradigm where the functions to be learnt are  subject to a set of logical constraints, which provide an expressive and formally well--defined representation for abstract knowledge. For instance, logical formulas may be translated into continuous functions by means of t-norms theory \cite{diligenti2015semantic}. This mapping allows the definition of an optimization problem that integrates supervised examples and the enforcement of logical constraints on a set of available groundings. In general, the resulting optimization problem is not guaranteed to be convex as in the original SVM framework due to contribution of the constraints. However, it turns out to be convex when considering formulas expressed with a fragment of the {\L}ukasiewicz logic \cite{giannini2017learning}. In this case, the problem can be formulated as quadratic optimization since the constraints are convex piece-wise linear functions. Other related methods to embed logical rules into learning schemes have been considered, such as \cite{serafini2016learning,serafini2016logic}, where a framework called Logic Tensor Networks has been proposed, and \cite{hu2016harnessing}, where logic rules are combined with neural network learning.

The notion of {\em support constraints} has been proposed in \cite{melacci2011,gori2013constraint} to provide an extension of the concept of support vector when dealing with learning from constraints. The idea is based on the definition of entailment relations among constraints and the possibility of constraint checking on the data distribution. In this paper, we provide a formal definition of {\em unnecessary constraints} that refines the concept of support constraint and we provide some theoretical results characterizing the presence of such constraints. These results are illustrated by examples that show in practice how the conditions are verified. The main idea is that unnecessary constraints can be removed from a learning problem without modifying the set of optimal solutions. Similarly, with the specific goal to define algorithms accelerating the search for solutions in optimization problems, it is worth to mention the works in the Constraint Reduction (CR) field. In particular, in \cite{jung2012adaptive} it is shown how to reduce the computational burden in a convex optimization problem by considering at each iteration the subset of the constraints that contains only the most critical (or necessary) ones. In this sense,  our approach allows us to determine theoretically which are the unnecessary constraints as well as to enlighten their logical relations with the other constraints.

The paper is organized as follows. In Section \ref{sec:prob} we introduce the notation and the problem formulation. Then, Section \ref{sec:scm} analyzes the structure of the optimal solutions, providing the conditions to determine the presence of unnecessary constraints. The formal definition of {\em unnecessary constraint} and the related theorems are reported in Section \ref{sec:unnecess}. In Section \ref{sec:examples} we show how the proposed method is applied by means of some examples and finally, some conclusions and future directions are discussed in Section \ref{sec:conc}.

\section{Learning from Constraints in Kernel Machines}\label{sec:prob}

We consider a multi--task learning problem with ${\bf P}=\{p_j:\mathbb{R}^{n_j}\rightarrow\mathbb{R}:\,j\leq J\}$ denoting a set of $J>0$ functions to be learned. We assume that each $p_j$ belongs to a \emph{Reproducing Kernel Hilbert Space} (RKHS)~\cite{paulsen2016introduction} $\mathcal{H}_j$ and it is expressed as 
\[
p_j(x)=\omega_j\cdot\phi_j(x)+b_j \ ,
\]	
where $\phi_j$ is a function that maps the input space into a feature space (possibly having infinite dimensions), such that $k_j(x,y)=\phi_j(x)^T\cdot\phi_j(y)$, where $k_j\in\mathcal{H}_j$ is the $j$-th kernel function. The notation is quite general to take into account the fact that predicates (f.i. unary or binary) can be defined on different domains and approximated by different kernel functions\footnote{Predicates sharing the same domain may be approximated in the same RKHS by using the same kernel function.}. 

We assume a semi-supervised scheme in which each $p_j$ is trained on two datasets, $\mathscr{L}_j$ containing the supervised examples  and $\mathscr{U}_j$ containing the unsupervised ones, while all the available input samples for $p_j$ are collected in $\mathscr{S}_j$, as follows 
\[
\begin{array}{l}
\mathscr{L}_j=\{(x_l,y_l):\,l\leq l_j,x_l\in\mathbb{R}^{n_j},y_l\in\{-1,+1\}\}, \\
\mathscr{U}_j=\{x_u:\,u\leq u_j,x_u\in\mathbb{R}^{n_j}\}, \\
\mathscr{S}_j=\{x_s:\,s\leq s_j\}=\{x_l:\,(x_l,y_l)\in\mathscr{L}_j\}\cup\mathscr{U}_j,\quad S=\sum_{j=1}^Js_j \ .
\end{array}
\]
In the following, whenever we write $p_j(x_s)$, we assume $x_s\in\mathscr{S}_j$. Functions in {\bf P} are assumed to be predicates subject to some prior knowledge expressed by a set of \emph{First--Order Logic} (FOL) formulas $\varphi_h$ with $h\leq H$ in a knowledge base $KB$, and evaluated on the available samples for each predicate.

\subsection{Constraints}
The learning problem is formulated to require the satisfaction of three classes of constraints, defined as follows.
\begin{itemize}
	\item \emph{Consistency} constraints derive from the need to limit the values of predicates into $[0,1]$, in order to be consistent with the logical operators:
	\[
	0\leq p_j(x_s)\leq 1,\qquad x_s\in\mathscr{S}_j,\,j\leq J \ .
	\]
	\item \emph{Pointwise} constraints derive from the supervisions by requiring the output of the functions to be 1 for target $y_l=1$ and 0 for $y_l=-1$:
	\[
	y_l(2 p_j(x_l)-1)\geq 1,\qquad (x_l,y_l)\in\mathscr{L}_j,\,j\leq J \ .
	\]
	\item \emph{Logical} constraints are obtained by mapping each formula $\varphi_h$ in KB into a continuous real-valued function $f_h$ according to the operations of a certain t-norm fuzzy logic\footnote{See e.g. \cite{hajek1998metamathematics}  for more details on fuzzy logics.}  (see Tab. \ref{tab:tnorms} for the  \L ukasiewicz fuzzy logic) and then
	forcing their satisfaction by
	\[
	1-f_h(\B p)\leq0,\qquad h\leq H \ ,
	\]
	where for any $j\leq J$, $\B p_j=[p_j(x_1),\ldots,p_j(x_{s_j})]\in[0,1]^{s_j}$ is the vector of the evaluations (groundings) of the $j$-th predicate on the samples in $\mathscr{S}_j$ and $\B p=[\B p_1,\ldots \B p_J]\in[0,1]^S$ is the concatenation of the groundings of all the predicates.
\end{itemize}


\begin{table}[t]
	\centering
	\begin{tabular}{|c|c|c|c|c|c|}
		\hline
		$\neg x$ & $x\otimes y$ & $x\wedge y$  & $x\vee y$ & $x\oplus y$  & $x\Rightarrow y$ \\
		\hline
		  $1-x$ & $\max\{0,x+y-1\}$ & $\min\{x,y\}$ & $\max\{x,y\}$ & $\min\{1,x+y\}$ & $\min\{1,1-x+y\}$ \\
		\hline
	\end{tabular}
	\vspace{0.1cm}
	\caption{Logic connectives and their algebraic semantics for the \L ukasiewicz logic. From left to right: \emph{negation, strong conjunction (t-norm), weak conjunction, weak disjunction, strong disjunction (t-conorm), implication (residuum).}}
	\label{tab:tnorms}
	\vspace{-0.5cm}
\end{table}

\subsection{Optimization Problem}	

Given the previously defined constraints, the learning problem can be formulated as \emph{primal} optimization as, 
\begin{problem}\label{eq:pripro}
	\[
	\small
	\displaystyle\min_{\omega_j}\; \frac{1}{2}\sum_{j=1}^J||\omega_j||^2\qquad \mbox{subject to:}
	\]
	\vspace{-0.1cm}
	\[
	\small
	\begin{array}{lcl} 
	0\leq p_j(x_s)\leq 1,\qquad& &\qquad  \mbox{for }x_s\in\mathscr{S}_j,\;j\leq J \\
	y_l(2 p_j(x_l)-1) \geq 1,\qquad& &\qquad \mbox{for } (x_l,y_l)\in\mathscr{L}_j,\;j\leq J \\
	1-f_h(\B p)\leq 0,\qquad& &\qquad \mbox{for }h\leq H
	\end{array}
	\]
\end{problem}
This problem was shown to be solvable by quadratic optimization provided the formulas in $KB$ belong to the convex \L ukasiewicz fragment (i.e. formulas exploiting only the operators $(\wedge,\oplus)$ in Tab.~\ref{tab:tnorms} \cite{giannini2018convex}) and, in the following, we keep this assumption. This yields the functional constraints to be both convex and piecewise linear functions, hence they can be expressed as the max of a set of $I_h$ affine functions\footnote{The number of linear pieces $I_h$ depends on both the formula and the number of groundings used in that formula.} (see Theorem 2.49 in \cite{rockafellar2009variational})
\begin{equation}\label{eq:constr}
1-f_h(\B p)=\max_{i\leq {I_h}}(M_{h,i}\cdot \B p+q_{h,i})
\end{equation}
where $M_{h,i}=[m^{h,i}_{1,1},\ldots,m^{h,i}_{1,s_1},\ldots,m^{h,i}_{J,s_J}]\in\mathbb{R}^S$ is a vector defining the $i$-th linear piece depending on the structure of the $h$-th formula, and $q_{h,i}\in\mathbb{R}$. Basically any $m^{h,i}_{j,s}$ weighs the contribution of the $s$-th sample in $\mathscr{S}_j$ for the $j$-th predicate in the $i$-th linear piece deriving from the  \L ukasiewicz formula of the $h$-th logic constraint. The matrix $M$, obtained concatenating all the $M_{h,i}$ by row, may have several null elements, as shown in the examples reported in the following.
\begin{example}\label{ex:convtran}
Let $p_1, p_2$ be a unary and a binary predicate, respectively, evaluated on $\mathscr{S}_1=\{x_1,x_2\}$ and $\mathscr{S}_2=\{(x_1,x_1),(x_1,x_2),(x_2,x_1),(x_2,x_2)\}=\mathscr{S}_1 \times \mathscr{S}_1$ so that $\B p_1=[p_1(x_1),p_1(x_2)]$, $\B p_2 =[p_2(x_1,x_1),p_2(x_1,x_2),p_2(x_2,x_1),p_2(x_2,x_2)]$ denote their grounding vectors. Given the formula $\varphi=\forall x\,\forall y\, (p_1(x)\wedge p_1(y))\Rightarrow p_2(x,y)$, according to the convex \L ukasiewicz operators, its corresponding functional constraint $1-f_{\varphi}$ can be rewritten as the max of a set of affine functions, i.e. $\max_{x,y}\{0,\,p_1(x)+p_1(y)-p_2(x,y)-1\}$, that can be made explicit with respect to the grounding vectors of $p_1$ and $p_2$ by:
	\[
	\small
	\begin{array}{l}
	\max\{0,\,2p_1(x_1)-p_2(x_1,x_1)-1,\,p_1(x_1)+p_1(x_2)-p_2(x_1,x_2)-1,\\
	p_1(x_1)+p_1(x_2)-p_2(x_2,x_1)-1,\,2p_1(x_2)-p_2(x_2,x_2)-1\} \ .
	\end{array}
	\]
	In this case $I_h=5$,  and, for instance, $M_{\varphi,2}=[2,0,-1,0,0,0]$ and $q_{h,1}=-1$.
\end{example}
According to eq. (\ref{eq:constr}), any logical constraint $1-f_h(\B p)\leq0$ for $h\leq H$ can be replaced by $I_h$ linear constraints $M_{h,i}\cdot \B p+q_{h,i}\leq 0$, yielding \emph{Problem \ref{eq:pripro}} to be reformulated as quadratic programming. Hence, assuming to satisfy the associated KKT--conditions and that the feasible set of solutions is not empty, for any $j\leq J$ the optimal solution obtained by differentiating the Lagrangian function of \emph{Problem \ref{eq:pripro}} (see \cite{giannini2018convex}) is computed as:
\begin{eqnarray}\label{eq:sol}
\small
p_j^*(x)= 2\sum_{l=1}^{l_j}\lambda^*_{j_l}y_lk_j(x_l,x)-\sum_{h=1}^H\sum_{i=1}^{I_h}\lambda^*_{h_i}\cdot\sum_{s=1}^{s_j}m^{h,i}_{j,s}\cdot k_j(x_s,x)+\nonumber\\
+\sum_{s=1}^{s_j}(\eta^*_{j_s}-\bar{\eta}^*_{j_s})k_j(x_s,x)+b^*_j  \ .
\end{eqnarray}
Each solution can be written as an expansion of the $j$-th kernel $k_j$ with respect to the three different types of constraints on the corresponding sample points. As in classical SVMs, we may study the constraints whose optimal Lagrange multipliers $\lambda^*_{j,l}, \lambda^*_{h,i}, \eta^*_{j,s}, \bar{\eta}^*_{j,s}$ are not null, namely the {\em support (active) constraints}.



\section{Unnecessary Constraints}\label{sec:scm}


The optimal solution of \emph{Problem \ref{eq:pripro}} is determined only by the support constraints. The problem is convex if the Gram matrix of the chosen kernel is positive-semidefinite and strictly convex if it is positive-definite. The solution is guaranteed to be unique only in this second case \cite{boyd2004convex}. For both cases, different multiplier vectors $\lambda$, $\eta$ may yield an optimal solution for the Lagrangian function associated to the problem, e.g. see \emph{Example \ref{ex:trans2}}.

 In this study, we are interested in constraints that are \emph{not necessary} for the optimization, even if they may turn out to be active for a certain solution. The main results of this paper establish some criteria to discover unnecessary constraints and their relationship with the underling consequence relation among formulas in \L ukasiewicz logic.

\subsection{About Multipliers for Logical Constraints}\label{sec:sscm}
By construction, pointwise and consistency constraints are both related to a single sample $x$ for a  given predicate. This means that the contribution of the active constraints of this type in any point is weighted by a specific multiplier, as expressed by the first and third summations in eq. $(\ref{eq:sol})$. On the opposite, each logical constraint involves in general more predicates eventually evaluated on different points (each Lagrange multiplier in the second summation of eq. $(\ref{eq:sol})$ is associated to a set of samples). Hence we may wonder if it exists a vector of Lagrange multipliers yielding the same contribution to the solution for each point, for which as much as possible multipliers are null. 

For simplicity, eq.  $(\ref{eq:sol})$ can be rewritten more compactly by grouping the terms with respect to any sample $x_s$ as
\begin{equation}
\small
p^*_j(x)=\sum_{s=1}^{s_j}\left(\alpha^{*(P)}_{j,s}+\alpha^{*(L)}_{j,s}+\alpha^{*(C)}_{j,s}\right)k_j(x_s,x)=\sum_{s=1}^{s_j}\alpha^*_{j,s}k_j(x_s,x)
\end{equation}
where $\alpha^{*(P)}_j(\lambda^*_{j,l}), \alpha^{*(L)}_j(\lambda^*_{h,i}), \alpha^{*(C)}_j(\eta^*_{j,s},\bar{\eta}^*_{j,s})$ denote the vectors of optimal coefficients (depending on optimal Lagrange multipliers) of the kernel expansion for pointwise, logical and consistency constraints respectively.
In particular, the term for the logical constraints is defined as
\begin{equation}\label{eq:logcoef}
\small
\alpha^{*(L)}_{j,s}=\sum_{h=1}^H\sum_{i=1}^{I_h}\lambda^*_{h,i}m^{h,i}_{j,s} \ .
\end{equation}
 Since (\ref{eq:logcoef}) corresponds to the overall contribution of the logical constraints to the $j$-th optimal solution in its $s$-th point, we are interested in the case where we obtain the same term with different values for the multipliers $\lambda^*_{h,i}$. In particular, we would like to verify if there exists $\bar{h}\leq H$ such that for every $j\leq J$ and for every $s\leq s_j$, it is possible to compute $\bar{\lambda}_{h,i}$ such that
\begin{equation}\label{eq:pr2}
\small
\alpha^{*(L)}_{j,s}=\sum_{\substack{h=1 \\ h\neq\bar{h}}}^H\sum_{i=1}^{I_h}\bar{\lambda}_{h,i}m^{h,i}_{j,s} \ .
\end{equation}
This condition yields the same solution to the original problem but without any direct dependence on the $\bar{h}$-th constraint.  This case can be determined as defined in the following \emph{Problem \ref{eq:prob}}, where a matrix formulation is considered, and then by looking for a solution (if there exist) with null components for the $\bar{h}$-th constraint. 

\begin{problem}\label{eq:prob}
	Given an optimal solution $\alpha^*$ for \emph{Problem \ref{eq:pripro}}, find $\lambda\in\mathbb{R}^N$ such that
	\[
	M\cdot \lambda= \alpha^* ,
	\]
	where $N=\sum_{h=1}^H I_h$ and $M=[M_{1,1},\ldots,M_{h,I_h}]\in\mathbb{R}^{S\times N}$.
\end{problem}
Let $v_1,\ldots,v_n$ be  an orthonormal base  of the space generated by $Ker(M)=\{\lambda:\,M\cdot\lambda=0\}$, such that any solution can be expressed as 
\[
\lambda=\lambda^*+\sum_{i=1}^nt_iv_i \ ,
\]
for some $t_i\in\mathbb{R}$. We have the following cases:
\begin{description}
	\item[(i)] if $dim(Ker(M))=0$ then the system allows the unique solution $\lambda^*$;
	\item[(ii)] if $dim(Ker(M))\neq 0$ then there exist infinite solutions.
\end{description}
In the first case, the only constraints whose multipliers give null contribution to the optimal solution are the original straw constraints. Whereas in the second case, we look for a solution $\bar{\lambda}$ (if there exists) where $\bar{\lambda}_{\bar{h},i}=0$ for any $i\leq {I_{\bar{h}}}$ for some $\bar{h}\leq H$. Indeed in such a case, we can replace $\lambda^*$ with $\bar{\lambda}$ by transferring the contribution of the $\bar{h}$-th constraint to the other constraints still obtaining the same optimal solution for the predicates. This is carried out by solving the linear system with $I_{\bar{h}}$ equations $\bar{\lambda}_{\bar{h},i}=0$ and $n$ variables $t_1,\ldots,t_n$. 

%
\begin{remark}
	In the following, we will say that a vector $(\lambda_{h,i})_{h\leq H,\,i\leq I_h}$ is a solution of \emph{Problem \ref{eq:prob}} with respect to $\bar{h}$, if it is a solution and $\lambda_{\bar{h},i}=0$ for every $i\leq I_{\bar{h}}$.
\end{remark}

\subsection{Unnecessary Hard--Constraints}\label{sec:unnecess}

Roughly speaking, we say that a given constraint is \emph{unnecessary} for a certain optimization problem if its enforcement does not affect the solution of the problem. The main idea is that if we consider two problems (defined on the same sample sets and with the same loss), one with and one without the considered constraint, both have the same optimal solutions. 
The relation between logical inference and deducible constraints arises naturally in this frame, indeed logical deductive systems involve truth-preserving inference. 
In addition, logical constraints are quite general to include both pointwise and consistency constraints. A supervision $(x_l,y_l)$ for a predicate $p$ can be expressed by $1\rightarrow p(x_l)$ if $y_l=1$ and by $p(x_l)\rightarrow 0$ if $y_l=-1$, while the consistency constraints by $(0\rightarrow p(x_l))\wedge(p(x_l)\rightarrow 1)$. We note that in this uniform view, \emph{Problem \ref{eq:prob}} applies to all the constraints.

%

\begin{definition}\label{def:unnecessary}
	Let us consider the learnable functions in {\bf P} evaluated on a sample $\mathscr{S}$ and $KB=\{\varphi_1,\ldots,\varphi_H\}$.  We say that $\varphi_{\bar{h}}\in KB$ is \emph{unnecessary} for ${\bf HP}$ if the optimal solutions of problems ${\bf HP}$ and ${\bf \overline{HP}}$ coincide, where
	\[
	\small
	({\bf HP}) \; \min_{\alpha} Loss(\alpha), \mbox{ with } 1-f_h(\B p)\leq0,\mbox{ for }h\leq H \ ,
	\]
	\[
	\small
	({\bf \overline{HP}}) \; \min_{\alpha} Loss(\alpha), \mbox{ with } 1-f_h(\B p)\leq0,\mbox{ for }h\leq H, h\neq\bar{h}
	\]
	$Loss(\alpha)=\displaystyle\sum_{j\leq J}\alpha'_jK_j\alpha_j$ and $K_j=\left(k_j(x_i,x_k)\right)_{i,k\leq s_j}$ is the Gram matrix of $k_j$. 
	
	.
\end{definition}
If $\mathcal{F}$ and $\overline{\mathcal{F}}$ are the feasible sets of ${\bf HP}$ and ${\bf \overline{HP}}$ respectively, we have $\mathcal{F}\subseteq\overline{\mathcal{F}}$, however in general they are not the same set.
%

Since all the considered constraints correspond to logical formulas, we can also exploit some consequence relation among formulas in \L ukasiewicz logic. In the following, we will write $\Gamma\models\phi$, where $\Gamma\cup\{\phi\}$ is a set of propositional formulas, to express the true-preserving logical consequence in {\bf \L}, stating that $\phi$ has to be evaluated as true for any assignation satisfying all the formulas in $\Gamma$.
\begin{proposition}\label{teo:log}
	If $\{\varphi_h:\,h\leq H,\,h\neq\bar{h}\}\models\varphi_{\bar{h}}$ then $\varphi_{\bar{h}}$ is unnecessary for ${\bf HP}$.
\end{proposition}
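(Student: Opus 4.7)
The plan is to show that the hypothesis $\{\varphi_h : h\neq\bar h\}\models\varphi_{\bar h}$ forces the feasible sets of $\mathbf{HP}$ and $\overline{\mathbf{HP}}$ to coincide; once this is in hand, unnecessariness is immediate since the two problems share the same objective $Loss(\alpha)$.

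First, I would recall the inclusion $\mathcal{F}\subseteq\overline{\mathcal{F}}$ already noted right after Definition~\ref{def:unnecessary}: it is trivial because $\overline{\mathbf{HP}}$ drops a constraint. So the whole content is in proving the reverse inclusion $\overline{\mathcal{F}}\subseteq\mathcal{F}$. The bridge between the algebraic formulation and the logical one is the standard fact that $f_h$ is the \L ukasiewicz valuation of $\varphi_h$ on the grounding vector $\mathbf{p}\in[0,1]^S$, so that $f_h(\mathbf{p})\leq 1$ always and $f_h(\mathbf{p})=1$ precisely when the assignment $\mathbf{p}$ satisfies $\varphi_h$. Consequently the constraint $1-f_h(\mathbf p)\leq 0$ is equivalent to $f_h(\mathbf p)=1$, i.e.\ to the logical satisfaction of $\varphi_h$. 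Here I would invoke the remark preceding the proposition to justify that consistency (and pointwise) constraints are themselves logical formulas, so that the grounding vector $\mathbf{p}$ produced by any feasible $\alpha$ lies in $[0,1]^S$ and therefore defines a legitimate \L ukasiewicz assignment to the atomic propositions given by the predicate groundings.

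With this dictionary in place, the core step is one line. Pick $\alpha\in\overline{\mathcal{F}}$. The induced $\mathbf{p}\in[0,1]^S$ satisfies every $\varphi_h$ with $h\neq\bar h$, so by the semantic consequence hypothesis it also satisfies $\varphi_{\bar h}$, i.e.\ $f_{\bar h}(\mathbf p)=1$, whence $1-f_{\bar h}(\mathbf p)\leq 0$ and $\alpha\in\mathcal{F}$. Combined with the reverse inclusion, $\mathcal{F}=\overline{\mathcal{F}}$, and since the two problems minimize the same $Loss$, their optimal sets coincide. By Definition~\ref{def:unnecessary}, $\varphi_{\bar h}$ is unnecessary for $\mathbf{HP}$.

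I do not expect a genuine obstacle; the only point to be careful about is the passage from the algebraic inequality $1-f_h(\mathbf p)\leq 0$ to the logical statement ``$\varphi_h$ holds under $\mathbf p$'', which is where the choice of the \L ukasiewicz t-norm semantics is used essentially, and the implicit presence of the consistency constraints that keep $\mathbf{p}\in[0,1]^S$ so that $\mathbf p$ really is a valuation of the propositional language over the groundings.
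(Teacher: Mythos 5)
Your proposal is correct and follows essentially the same route as the paper: the paper's proof is exactly the observation that the consequence hypothesis forces $\mathcal{F}=\overline{\mathcal{F}}$, after which equality of the optimal sets follows from the shared loss. You merely make explicit the dictionary between the inequality $1-f_h(\B p)\leq 0$ and \L ukasiewicz satisfaction of $\varphi_h$, which the paper leaves implicit.
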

\begin{proof}
	By hypothesis, any solution satisfying the constraints of ${\bf \overline{HP}}$ satisfies the constraints of ${\bf HP}$ as well, namely we have $\mathcal{F}=\overline{\mathcal{F}}$. The conclusion easily follows since the two problems have the same loss function with the same feasible set.
\end{proof}
	One advantage of this approach is providing some criteria to determine the constraints that are not necessary for a learning problem. Indeed, in presence of a large amount of logical rules, \emph{Proposition \ref{teo:log}} guarantees we can remove all the deducible constraints simplifying the optimization still getting the same solutions.

The vice versa of \emph{Proposition \ref{teo:log}} is not achievable, since the logical consequence has to hold for every assignation. The notion of unnecessary constraint is local to a given dataset, indeed the available sample is limited and fixed in general. However, if a constraint is unnecessary then the optimal solutions with or without it coincide and we have that such constraint is satisfied whenever the other ones are satisfied by any optimal assignations. Such consequence among constraints, taking into account only the assignations leading to best solutions on a given dataset, provides an equivalence with the notion of unnecessary constraint. It is interesting to notice that a slightly different version of this consequence has already been  considered in \cite{melacci2011}.

\subsection{Towards an Algebraic Characterization}
In Sec. \ref{sec:sscm} we introduced a criterion to discover if a given constraint $\varphi_{\bar{h}}$ can be deactivated solving \emph{Problem \ref{eq:prob}}. The method consists in finding a vector of Lagrange multipliers with null components corresponding to $\varphi_{\bar{h}}$. We are now interested in discovering the relation between this criterion and the notion of unnecessary constraint. Some results are stated by the following propositions.

\begin{proposition}
	If $\varphi_{\bar{h}}$ is unnecessary for ${\bf HP}$ then for any optimal solution of this problem there exists a $KKT$-solution $\bar{\lambda}$ of \emph{Problem \ref{eq:prob}} with respect to $\bar{h}$. 
\end{proposition}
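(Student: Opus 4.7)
\medskip

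\noindent\textbf{Proof plan.} The plan is to transfer KKT multipliers from $\bf \overline{HP}$ back to $\bf HP$ by padding with zeros in the $\bar h$-block. Fix an optimal solution $\alpha^*$ of $\bf HP$. Since $\varphi_{\bar h}$ is unnecessary, Definition~\ref{def:unnecessary} gives that $\alpha^*$ is also optimal for $\bf \overline{HP}$; this uses only the inclusion $\mathcal F\subseteq\overline{\mathcal F}$ together with the coincidence of the two optimal sets. Both problems are convex quadratic programs with piecewise linear constraints, so (under the standing regularity assumption already used in the derivation of eq.~(\ref{eq:sol})) the KKT conditions are necessary and sufficient at $\alpha^*$ for each of them.

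\medskip

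\noindent First I would write the KKT system of $\bf \overline{HP}$. This yields a vector of Lagrange multipliers $\{\tilde\lambda_{h,i}\}_{h\neq\bar h}$ for the remaining logical constraints, together with multipliers $\tilde\lambda_{j_l}$, $\tilde\eta_{j,s}$, $\tilde{\bar\eta}_{j,s}$ for the pointwise and consistency constraints, all nonnegative, satisfying stationarity, primal feasibility, and complementary slackness for $\bf \overline{HP}$. Next I would define $\bar\lambda$ by setting $\bar\lambda_{h,i}:=\tilde\lambda_{h,i}$ for $h\neq\bar h$ and $\bar\lambda_{\bar h,i}:=0$ for every $i\le I_{\bar h}$, keeping the same pointwise and consistency multipliers. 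By construction, the $\bar h$-block of $\bar\lambda$ is identically zero, which is exactly the ``with respect to $\bar h$'' condition in the remark preceding the statement.

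\medskip

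\noindent I would then check that $\bar\lambda$ is a KKT-solution of $\bf HP$. Dual feasibility is immediate since zero is nonnegative. Primal feasibility is inherited from $\alpha^*\in\mathcal F$. Complementary slackness for the $\bar h$-th constraint holds trivially because $\bar\lambda_{\bar h,i}=0$, while for the other constraints it is inherited from the KKT conditions of $\bf \overline{HP}$. Finally, stationarity of the Lagrangian of $\bf HP$ at $\alpha^*$ with multipliers $\bar\lambda$ reduces to the stationarity of the Lagrangian of $\bf \overline{HP}$ at $\alpha^*$, because the only terms of $\bf HP$ that differ from those of $\bf \overline{HP}$ are those multiplied by $\bar\lambda_{\bar h,i}=0$. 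Reading off the kernel expansion (\ref{eq:sol})/(\ref{eq:logcoef}) from this stationarity condition gives $M\bar\lambda=\alpha^{*(L)}$, which is precisely the system of \emph{Problem~\ref{eq:prob}} in its logical-contribution form.

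\medskip

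\noindent The main obstacle I anticipate is a bookkeeping one rather than a conceptual one: making sure that the ``KKT-solution'' of \emph{Problem~\ref{eq:prob}} is interpreted as in Section~\ref{sec:sscm}, namely a multiplier vector that both solves the linear system coming from the stationarity of $\bf HP$ and arises from a genuine KKT tuple, and that the constraint qualification implicitly used to write eq.~(\ref{eq:sol}) applies uniformly to $\bf HP$ and $\bf \overline{HP}$ so that KKT multipliers exist in both problems. Once this is settled, the zero-padding argument above closes the proof.
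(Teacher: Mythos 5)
Your proof follows essentially the same route as the paper's: both take the KKT multipliers of ${\bf \overline{HP}}$ at the common optimal solution $\alpha^*$ (which exists because ``unnecessary'' forces the two optimal sets to coincide) and pad them with zeros on the $\bar h$-block to obtain the required KKT-solution of \emph{Problem \ref{eq:prob}} with respect to $\bar h$. Your write-up is merely more explicit than the paper's in verifying the individual KKT conditions and in noting that the linear system is really $M\bar\lambda=\alpha^{*(L)}$, but the underlying argument is identical.
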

\begin{proof}
	If $\varphi_{\bar{h}}$ is unnecessary then ${\bf HP}$ and ${\bf \overline{HP}}$ have the same optimal solutions. Let us consider one of them, lets say $\alpha^*$, where  $\alpha^*=\alpha(\lambda^*_{h,i})=\alpha(\hat{\lambda}^*_{h,i})$ for the two problems with respect to some multipliers vectors $(\lambda^*_{h,i})_{h\leq H,\,i\leq I_h}$ and $(\hat{\lambda}^*_{h,i})_{h\leq H,\,h\neq\hat{h},\,i\leq I_h}$. 
	Since the two vectors of multipliers yield the same optimal solution, then we can define for every $h\leq H,i\leq I_{h}$ a solution still satisfying the KKT-conditions (also called a KKT-solution) $\bar{\lambda}$ of \emph{Problem \ref{eq:prob}} as:
	\[ 
	\small
	\bar{\lambda}_{h,i}=
	\begin{cases} 
	\hat{\lambda}^*_{h,i} & \mbox{for }h\neq\bar{h} \\
	0 & \mbox{otherwise} \ .
	\end{cases}
	\]
\end{proof}

This has to be thought of as a necessary condition to discover which logical constraints can be removed from ${\bf HP}$ still preserving its optimal solutions. However, the other way round does hold in case either ${\bf HP}$ or ${\bf \overline{HP}}$ has a unique solution $\alpha^*$, but in general we can only prove a weaker result.

\begin{proposition}\label{teo:inc}
	If there exists a $KKT$-solution $\bar{\lambda}$ of \emph{Problem \ref{eq:prob}} with respect to $\bar{h}$ (for a certain optimal solution $\bar{\alpha}^*=\alpha(\bar{\lambda})$ of ${\bf HP}$), then the set of optimal solutions of ${\bf HP}$ is included in the set of optimal solutions of ${\bf \overline{HP}}$. 		
\end{proposition}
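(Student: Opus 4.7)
The plan is to show that the particular optimum $\bar\alpha^*$ singled out by the hypothesis is itself optimal for ${\bf \overline{HP}}$, and then to deduce the claimed inclusion by comparing feasible sets and loss values. First I would unpack the hypothesis: saying that $\bar\lambda$ is a KKT-solution of \emph{Problem \ref{eq:prob}} with respect to $\bar h$ means not only $M\bar\lambda=\bar\alpha^*$ and $\bar\lambda_{\bar h,i}=0$ for every $i\leq I_{\bar h}$, but also the existence of accompanying Lagrange multipliers for the pointwise and consistency constraints such that the full KKT system of ${\bf HP}$ (stationarity, primal and dual feasibility, complementary slackness) is satisfied at $\bar\alpha^*$.

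Next I would construct a KKT-solution of ${\bf \overline{HP}}$ at the very same point $\bar\alpha^*$ by simply dropping the $\bar h$-block: keep the pointwise and consistency multipliers, and retain $\bar\lambda_{h,i}$ for $h\neq\bar h$. Stationarity is preserved because the $\bar h$-block contributed exactly zero to the original stationarity identity (its multipliers are null, cf.\ eq.\ (\ref{eq:logcoef})); primal feasibility is preserved because $\mathcal F\subseteq\overline{\mathcal F}$; dual feasibility and complementary slackness for the surviving indices are inherited by restriction. Since ${\bf \overline{HP}}$ is still a convex quadratic program (the loss is a sum of quadratic forms induced by the Gram matrices $K_j$ and the surviving constraints are convex piecewise linear), the KKT conditions are sufficient for optimality, hence $\bar\alpha^*$ is optimal also for ${\bf \overline{HP}}$. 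In particular the two problems share the common optimal value $v^{*}:=Loss(\bar\alpha^*)$.

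To finish, I would take any optimal $\alpha^*$ of ${\bf HP}$: by definition $Loss(\alpha^*)=Loss(\bar\alpha^*)=v^{*}$, and $\alpha^*\in\mathcal F\subseteq\overline{\mathcal F}$, so $\alpha^*$ is feasible for ${\bf \overline{HP}}$ and attains its optimal value, placing it in the optimal set of ${\bf \overline{HP}}$. The main obstacle is conceptual rather than technical: one must carefully separate the three families of multipliers and note that a zero logical multiplier in a KKT-solution of ${\bf HP}$ does not say that the $\bar h$-th constraint is slack (it may well be tight), only that it exerts no force on this stationary point. It is precisely this dual-side vanishing, combined with convexity (which turns KKT into a sufficient condition), that licenses removing $\varphi_{\bar h}$ without disturbing the optimality of $\bar\alpha^*$; the one-sidedness of the inclusion then reflects the fact that ${\bf \overline{HP}}$ has a strictly larger feasible set and may admit extra optima not feasible for ${\bf HP}$.
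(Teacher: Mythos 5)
Your proof is correct and follows essentially the same route as the paper's: restrict $\bar\lambda$ to the components $h\neq\bar h$ to obtain KKT multipliers for ${\bf \overline{HP}}$ at $\bar\alpha^*$, invoke convexity to conclude that $\bar\alpha^*$ is optimal for ${\bf \overline{HP}}$, and then transfer optimality to any other optimum of ${\bf HP}$ via equality of loss values and the inclusion $\mathcal F\subseteq\overline{\mathcal F}$. Your version merely spells out the bookkeeping of the three multiplier families and the verification of each KKT condition in more detail than the paper does.
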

\begin{proof}
	Given any optimal solution $\alpha^*$ of ${\bf HP}$, since the problem is (at least) convex, we have $Loss(\alpha^*)=Loss(\bar{\alpha}^*)$. At this point, we note that $\bar{\alpha}^*$ is also feasible for ${\bf \overline{HP}}$ and that the restriction of $\bar{\lambda}$ on components $h\neq\bar{h}$ is a vector of Lagrange multipliers for ${\bf \overline{HP}}$ satisfying the KKT-conditions. The convexity of the problem guarantees that the KKT-conditions are sufficient as well. This means that $\bar{\alpha}^*$ is also an optimal solution for ${\bf \overline{HP}}$, hence its loss value is a global minimum and the same holds for $\alpha^*$.
\end{proof}
In this case we can not conclude that any optimal solution of ${\bf \overline{HP}}$ is an optimal solution for {\bf HP} because in general this solution could be not feasible for this problem. However as we pointed out above, we have the following result.
\begin{corollary}\label{teo:coro}
	If either ${\bf HP}$ or equivalently ${\bf \overline{HP}}$ has a unique solution then the premise of {\bf Proposition \ref{teo:inc}} is also sufficient.		
\end{corollary}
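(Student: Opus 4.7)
The plan is to promote the one-sided inclusion of Proposition \ref{teo:inc} to an equality of optimal solution sets, after which Definition \ref{def:unnecessary} gives the conclusion immediately. The two ingredients I would combine are the inclusion itself and a strict convexity argument that makes uniqueness for ${\bf HP}$ and for ${\bf \overline{HP}}$ interchangeable.

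First I would unpack the ``or equivalently'' in the statement. As observed right after eq.~(3), an optimal solution of the quadratic problems under consideration is unique precisely when the loss is strictly convex, which in turn depends only on the Gram matrices $K_j$ being positive definite. This property is intrinsic to the kernels and does not involve the particular constraint system, so assuming that either ${\bf HP}$ or ${\bf \overline{HP}}$ has a unique solution is the same as assuming that the common loss is strictly convex. Since $\mathcal{F}\subseteq\overline{\mathcal{F}}$ and both feasible sets are nonempty (${\bf HP}$ is assumed solvable), strict convexity forces both problems to have a unique minimizer on their respective convex feasible set.

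Next I would exploit the premise of Proposition \ref{teo:inc}: the existence of a KKT-solution $\bar\lambda$ of Problem \ref{eq:prob} with $\bar\lambda_{\bar h,i}=0$ yields an optimal $\bar\alpha^*=\alpha(\bar\lambda)$ of ${\bf HP}$ whose restricted multipliers satisfy the KKT-conditions of ${\bf \overline{HP}}$, so that $\bar\alpha^*$ is optimal for ${\bf \overline{HP}}$ as well, and moreover
\[
\{\text{optimal solutions of }{\bf HP}\}\ \subseteq\ \{\text{optimal solutions of }{\bf \overline{HP}}\}.
\]
Denote by $\alpha^*$ and $\tilde\alpha^*$ the unique solutions of ${\bf HP}$ and ${\bf \overline{HP}}$ respectively. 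The inclusion says $\alpha^*\in\{\tilde\alpha^*\}$, i.e.\ $\alpha^*=\tilde\alpha^*$, hence the two singletons of optimal solutions coincide. By Definition \ref{def:unnecessary}, $\varphi_{\bar h}$ is then unnecessary for ${\bf HP}$, which is exactly the implication we need.

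The only delicate point, and the one I would be most careful about, is the equivalence between uniqueness for ${\bf HP}$ and for ${\bf \overline{HP}}$: without strict convexity of the loss, ${\bf \overline{HP}}$ could in principle admit extra optima living in $\overline{\mathcal{F}}\setminus\mathcal{F}$ even when ${\bf HP}$ has a unique one, which would block the reverse inclusion. Grounding the ``equivalently'' in positive-definiteness of the Gram matrices removes precisely this obstacle, after which the corollary reduces to a direct consequence of Proposition \ref{teo:inc}.
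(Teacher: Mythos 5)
Your proof is correct and follows essentially the same route as the paper: ground the ``or equivalently'' in the positive-definiteness of the Gram matrices (which is common to both problems and makes the loss strictly convex), then observe that the inclusion from \emph{Proposition \ref{teo:inc}} between two singleton sets of optimal solutions forces equality, so $\varphi_{\bar h}$ is unnecessary by \emph{Definition \ref{def:unnecessary}}. You spell out the uniqueness transfer more carefully than the paper's one-line argument, but the underlying reasoning is identical.
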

\begin{proof}
	The solution is unique if the Gram matrix K, that is the same in both the problems, is positive-definite. Hence, requiring the uniqueness of the solution for the two problems is equivalent and the claim is trivial from {\bf Proposition \ref{teo:inc}}.
\end{proof}


\section{Some Examples}\label{sec:examples}
Here we illustrateq
, by means of some cases solved in MATLAB with the interior-point-convex algorithm, how the method works and we discuss the results to clarify what described so far. In particular, we exploit the \emph{transitive law} as an example to enlighten how the presented theoretical results apply.

\begin{example}\label{ex:trans}
	We are given the predicates $p_1,p_2,p_3$ subject to $\forall x\,p_1(x)\rightarrow p_2(x)$, $\forall x\, p_2(x)\rightarrow p_3(x)$, $\forall x\,p_1(x)\rightarrow p_3(x)$.
	Given a common evaluation dataset $\mathscr{S}$, the logical formulas can be translated into the following linear constraints
	\[
	\small
	\begin{array}{ccc}
	\max_{x\in\mathscr{S}}\{0,p_1(x)-p_2(x)\},\;&
	\max_{x\in\mathscr{S}}\{0,p_2(x)-p_3(x)\},\;&
	\max_{x\in\mathscr{S}}\{0,p_1(x)-p_3(x)\}
	\end{array}
	\]
	and yield the following terms for the Lagrangian associated to \emph{Problem  \ref{eq:pripro}},
	\[
	\lambda_{1,1}(p_1(x_1)-p_2(x_1)),\ldots,\lambda_{3,s}(p_1(x_s)-p_3(x_s)) \ .
	\]
	At first we solve the optimization problem where, to avoid trivial solutions, we provide few supervisions for the predicates and we exploit a polynomial kernel. To keep things clear, we consider only two points defined in $\mathbb{R}^2$, $\mathscr{S}=\{(1,0.5),\,(0.4,0.3)\}$. Hence, given the solution $\alpha(\lambda^*)$ (uniqueness holds) of \emph{ Problem \ref{eq:pripro}} (see Fig. \ref{fig:2points}), where $\lambda^*=(0.5549,0,0,0.5706,0,0)$, we have
	
	\[
	\small
	M_2=\left(
	\begin{array}{cccccc}
	1 & 0 & 0 & 0 & 1 & 0 \\
	0 & 1 & 0 & 0 & 0 & 1 \\
	-1 & 0 & 1 & 0 & 0 & 0 \\
	0 & -1 & 0 & 1 & 0 & 0 \\
	0 & 0 & -1 & 0 & -1 & 0 \\
	0 & 0 & 0 & -1 & 0 & -1 
	\end{array}
	\right) ,
	\mbox{ }  
	\alpha=M_2\cdot\lambda^*=
	\left(
	\begin{array}{c}
	\lambda^*_{1,1}+ \lambda^*_{3,1} \\
	\lambda^*_{1,2}+ \lambda^*_{3,2} \\
	-\lambda^*_{1,1}+ \lambda^*_{2,1} \\
	-\lambda^*_{1,2}+ \lambda^*_{2,2} \\
	-\lambda^*_{2,1}- \lambda^*_{3,1} \\
	-\lambda^*_{2,2}- \lambda^*_{3,2} 
	\end{array}
	\right)=
	\left(
	\begin{array}{c}
	0.5549 \\
	0 \\
	-0.5549 \\
	0.5706 \\
	0 \\
	-0.5706 
	\end{array}
	\right) \ .
	\]
	In this case all the solutions of \emph{ Problem \ref{eq:prob}} are given for any $t_1,t_2\in\mathbb{R}$ by
	\[
	\small
	\lambda=\lambda^*+t_1\cdot\left(
	\begin{array}{c}
	-1 \\
	0 \\
	-1 \\
	0 \\
	1 \\
	0 
	\end{array}
	\right)+t_2\cdot\left(
	\begin{array}{c}
	0 \\
	-1 \\
	0 \\
	-1 \\
	0 \\
	1 
	\end{array}
	\right)=
	\left(
	\begin{array}{c}
	\lambda^*_{1,1}-t_1 \\
	\lambda^*_{1,2}-t_2 \\
	\lambda^*_{2,1}-t_1 \\
	\lambda^*_{2,2}-t_2 \\
	\lambda^*_{3,1}+t_1 \\
	\lambda^*_{3,2}+t_2 \\
	\end{array}
	\right)=\left(
	\begin{array}{c}
	0.5549-t_1 \\
	-t_2 \\
	-t_1 \\
	0.5706 -t_2\\
	t_1 \\
	t_2 
	\end{array}
	\right)
	\ ,
	\]
	where the pair of vectors $v_1=(-1,0,-1,0,1,0)',\,v_2=(0,-1,0,-1,0,1)'$ is a base for $Ker(M_2)$. From this, we get that the only way to obtain the same $\alpha$ nullifying the contribution of the third constraint is taking $t_1=t_2=0$, namely taking $\lambda=\lambda^*$. It is worth to notice that we can also decide to nullify the contribution of the first or of the second constraint taking $t_1=0.5549,t_2=0$ or $t_1=0,t_2=0.5706$. In these cases we get $\lambda^*_1=(0,0,-0.5549,0.5706,0.5549,0)'$, $\lambda^*_2=(0.5549,-0.5706,0,0,0,0.5706)'$, but the third one is a support constraint.
\end{example}

\begin{figure}[t]
	\centering
	\begin{minipage}{5cm}
		\includegraphics[width=5cm]{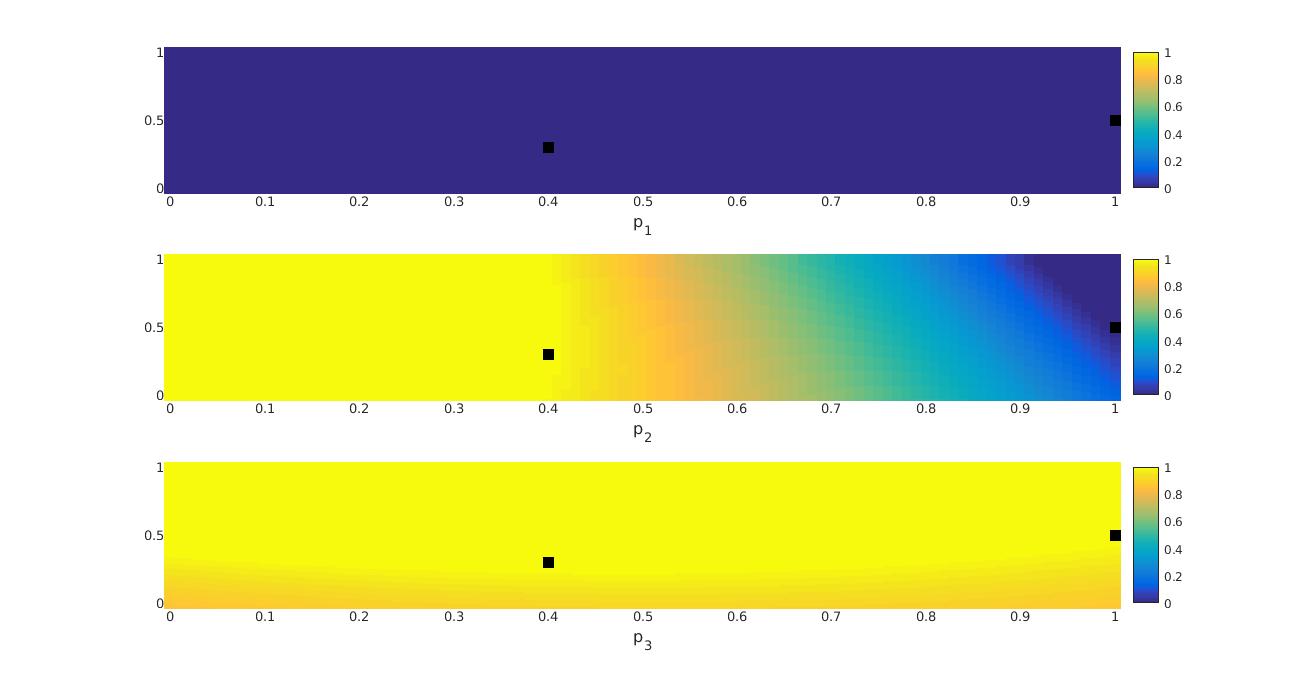}
	\end{minipage}
	\hspace{0.1cm}
	\begin{minipage}{5cm}
		\includegraphics[width=5cm]{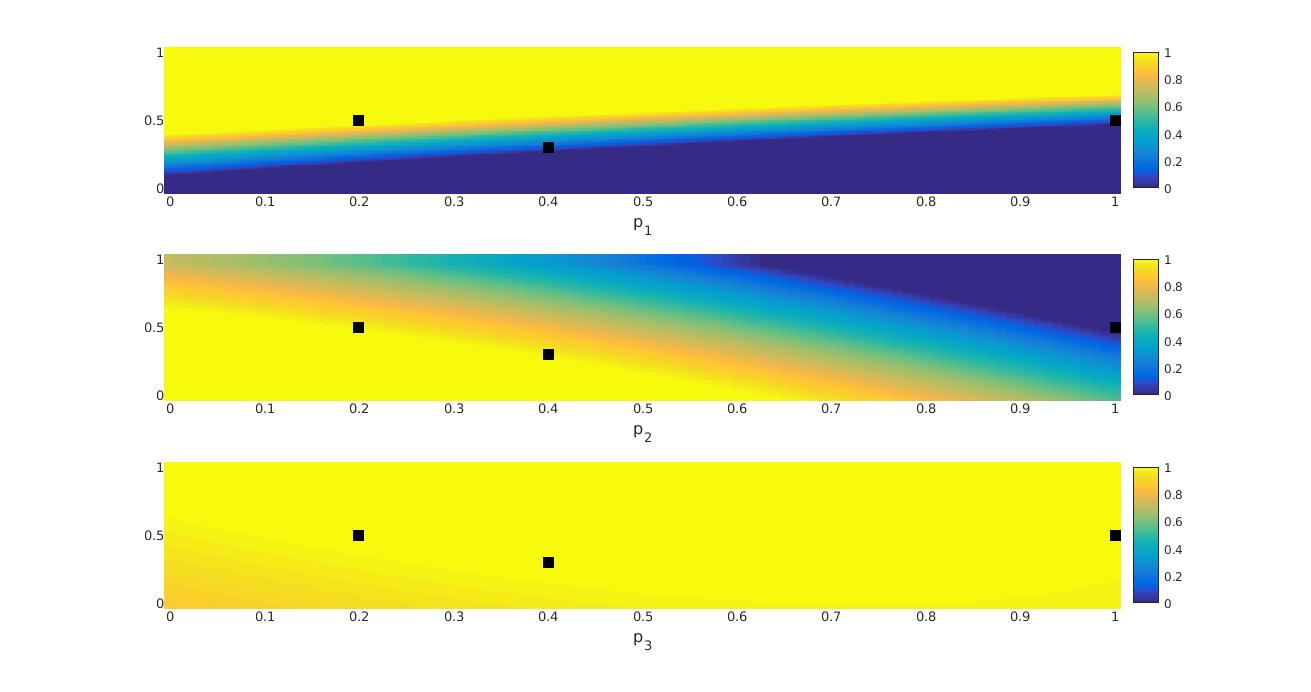}
	\end{minipage}
	\caption{From left to right we report the evaluation of the learnt functions $p_1,p_2,p_3$ in the example space for \emph{Example \ref{ex:trans}} and \emph{Example \ref{ex:trans2}}, respectively. Filled squares correspond to the provided sample points.}
	\label{fig:2points}
\end{figure}



 Although it is easy to see that the third constraint is deducible from the other ones, \emph{ Problem \ref{eq:pripro}} may give a different perspective in terms of support constraints.

\begin{example}\label{ex:trans2}
	Given the same problem as \emph{Example \ref{ex:trans}} with the additional point $(0.2,0.5)$ in $\mathscr{S}$, we get $\lambda^*=(0.3520, 0.3453, 0, 1.1529, 0, 0.5631, 0.4202, 0, 0)$, hence the third constraint turns out to be initially supporting. However we may wonder if there is another solution of \emph{ Problem \ref{eq:prob}} where the components of the third constraint are null. 
	The matrix $M_3$ is obtained from $M_2$ by adding three rows and three columns corresponding to the additional grounding of the predicates and to the components for the logical constraints on the new point. 
	\[
	\small
	M_3=
	\left(
	\begin{array}{ccccccccc}
	1 & 0 & 0 & 0 & 0 & 0 & 1 & 0 & 0 \\
	0 & 1 & 0 & 0 & 0 & 0 & 0 & 1 & 0 \\
	0 & 0 & 1 & 0 & 0 & 0 & 0 & 0 & 1 \\
	-1 & 0 & 0 & 1 & 0 & 0 & 0 & 0 & 0 \\
	0 & -1 & 0 & 0 & 1 & 0 & 0 & 0 & 0 \\
	0 & 0 & -1 & 0 & 0 & 1 & 0 & 0 & 0 \\
	0 & 0 & 0 & -1 & 0 & 0 & -1 & 0 & 0 \\
	0 & 0 & 0 & 0 & -1 & 0 & 0 & -1 & 0 \\
	0 & 0 & 0 & 0 & 0 & -1 & 0 & 0 & -1
	\end{array}
	\right)
	\]
	In this case, the dimension of $Ker(M_3)$ is increased exactly by one, as the number of affine components of any involved logical constraint. This means, we can try to find a $\bar{\lambda}^*$ in which a certain constraint has null values. For instance, the vector $\bar{\lambda}^*$= $(0.7722, 0.3453, 0, 1.5731,$ $ 0, 0.5631, 0, 0, 0)$ is a solution of \emph{Problem \ref{eq:prob}} with respect to the third constraint. However, as in \emph{Example \ref{ex:trans}}, it is the only KKT-solution allowing us to remove the contribution of a constraint.
\end{example}


\subsection{From Support to Necessary Constraints}
 Combining pointwise and consistency constraints brings any optimal solution to be evaluated exactly to 0 or 1 on any supervised sample and all the corresponding Lagrange multipliers to be different from zero, namely they will turn out to be support constraints. However, they could be unnecessary constraints for the problem and we could actually remove them from the optimization. 

\begin{example}
	We consider the same problem as \emph{Example \ref{ex:trans}} where  $\mathscr{S}=\{(0.4,0.3)\}$ is labelled as negative for $p_1$ and positive for both $p_2$ and $p_3$.  We express the pointwise and the consistency constraints in logical form. All the constraints are obtained requiring the following linear functions to be less or equal to zero:
	\[
	\small
	\begin{array}{ccc}
	(\mbox{logical})&(\mbox{pointwise})&(\mbox{consistency})\\
	&&\\
	p_1(x_1)-p_2(x_1), & p_1(x_1), & -p_1(x_1), p_1(x_1)-1,  \\
	p_2(x_1)-p_3(x_1), & 1-p_2(x_1), & -p_2(x_1), p_2(x_1)-1, \\
	p_1(x_1)-p_3(x_1), & 1-p_3(x_1), & -p_3(x_1), p_3(x_1)-1.
	\end{array}
	\]
	Exploiting the complementary slackness and the condition for the Lagrange multipliers given by \emph{Problem \ref{eq:prob}}, we can provide several combinations of values for the multipliers yielding the same solution. The Gram matrix $K$ is positive-definite ($K=1.25$) and the solution $\alpha^*=(0,0.8,0.8)$ provided by a linear kernel is unique. For this simple example we have only two possible KKT-solutions of \emph{Problem \ref{eq:prob}} minimizing the number of necessary constraints, they are $\bar{\lambda}=(0,0,0,0,0,0,0,0,0,0.8,0,0.8)'$ and $\hat{\lambda}=(0,0.8,0,0,0,0,0,0,0,0,0,1.6)'$. 
	This may be easily shown since the complementarity slackness force $\lambda_1=\lambda_3=\lambda_8=\lambda_9=\lambda_{11}=0$ and multiplying by $M$ the remaining multipliers, they have to satisfy:
	\[
	\small
	\begin{cases}
	\lambda_4-\lambda_7=0 \\
	\lambda_2-\lambda_5+\lambda_{10}=0.8 \\
	-\lambda_2-\lambda_6+\lambda_{12}=0.8  \ .
	\end{cases}
	\]
	Since {\bf HP} has a unique solution, from {\bf Corollary \ref{teo:coro}}, we have two different minimal optimization problems. One with only $p_2(x_1)-1\leq0$ and $p_3(x_1)-1\leq0$ as necessary constraints and the other with only $p_2(x_1)-p_3(x_1)\leq0$ and $p_3(x_1)-1\leq0$ once again.
\end{example}

\section{Conclusions}\label{sec:conc}

In general, in learning from constraints, several constraints are combined into an optimization scheme and often it is quite difficult to identify the contribution of each of them. In particular, some constraints could turn out to be not necessary for finding a solution. In this paper, we propose a formal definition of unnecessary constraint as well as a method to determine which are the unnecessary constraints for a learning process in a multi-task problem. The necessity of a certain constraint is related to the notion of consequences among the other constraints that are enforced at the same time. This is a reason why we suppose to deal with logical constraints that are quite general to include both pointwise and consistency constraints. The logical consequence among formulas is a sufficient condition to conclude that a constraint, corresponding to a certain formula, is unnecessary. However, we also provide an algebraic necessary condition that turns out to be sufficient in case the Gram matrices associated to the kernel functions are positive-definite. 


%

\bibliographystyle{splncs04}
\bibliography{references}

\end{document}